\definecolor{backgrey}{rgb}{0.86,0.86,0.86}
\definecolor{dblue}{rgb}{0,0.0,0.5}
\definecolor{dred}{rgb}{0.4,0.2,0}
\definecolor{dgreen}{rgb}{0.0,0.5,0}
\newcommand{\captionfonts}{\small}
\long\def\@makecaption#1#2{%
  \vskip\abovecaptionskip
  \sbox\@tempboxa{{\captionfonts #1: #2}}%
  \ifdim \wd\@tempboxa >\hsize
    {\captionfonts #1: #2\par}
  \else
    \hbox to\hsize{\hfil\box\@tempboxa\hfil}%
  \fi
  \vskip\belowcaptionskip}
\newtheorem{theorem}{Theorem}
\newtheorem{remark}{Remark}
\newtheorem{proof}{Proof}
\newtheorem{definition}{Definition}
\newtheorem{lemma}{Lemma}
\title{\LARGE \bf Information Flow Decomposition in Feedback Systems: General Case Study}
\author{\quad Bertrand Wechsler, Dan Eilat and Nicolas Limal
}
\begin{document}
\maketitle \thispagestyle{empty} \pagestyle{plain}
\begin{abstract}
We derive three fundamental decompositions on relevant information quantities in feedback systems. The feedback systems considered in this paper are only restricted to be causal in time domain and the channels are allowed to be subject to arbitrary distribution. These decompositions comprise the well-known mutual information and the directed information, and indicate a law of conservation of information flows in the closed-loop network.
\end{abstract}
\begin{keywords}
\normalfont \normalsize
feedback, information flow, directed information, mutual information
\end{keywords}

\section{Introduction}

\indent Feedback systems have been well studied and understood in the community of control theory since almost one hundred years ago. In 1960's, the communication community started to pay attention to feedback systems and, since then, many notable results have been established \cite{Shannon58,Schalkwijk66_2,Cover88, bookhan03, chong_isit11,Kim08_capacity_fb, Kim10,Chong11_isit_bounds, Chong11_allerton_UpperBound,Ofer07,Tati09,Permuter09}. One recent breakthrough is the notion of directed information introduced by Massey \cite{Massey1990}. The directed information successfully assesses the amount of information flowing from one random sequence to another in a causal fashion. This notion has wide applications in different research fields. For example, it characterizes the capacity of channels with noiseless feedback \cite{Kim08_capacity_fb, Kim10,Ofer07,Tati09,Permuter09}, provides understanding on portfolio theory, data compression and hypothesis testing \cite{Permuter_portfolio}, and develops fundamental limitations for networked control system \cite{Martins_control}.\\
\indent The relationship between the directed information and mutual information has been well investigated and a conservation law has been found in \cite{Massey_conservation}. However, the relationship between these two important quantities in general feedback system (as shown in Fig. \ref{Fig:general_feedback_system}) is not clear until now. In Fig. \ref{Fig:general_feedback_system}, $S_1$ and $S_2$ are two plants/systems, communicating to each other through noisy channels represented by $C_1$ and $C_2$. $m$ is some external information given to the system $S_1$ and $z$ is the information of $m$ extracted from system $S_2$. When specified to feedback communication systems, $m$ is the message and $E$ represents the encoder/transmitter. $D$ represents the decoder and $\hat{m}$ is the estimated message. For the ease of readability, in this paper, we are interested in the case of $u = y$ (i.e., the system $S_2$ is an unit gain system), as shown in Fig. \ref{Fig:general_feedback_system} (right). The results can be naturally extended to the general case as shown in Fig \ref{Fig:general_feedback_system} (left).\\
\indent In the literature, there exist a few results on feedback communication systems, to list a few \cite{Chong11_allerton_finiteCapacity, Draper06, Kim07,Chong12_allerton_sideInfo,Martins08,  Chance10}. However, it is still far away from understanding the information flow and finding the relationships among relevant quantities in general feedback systems. In this paper, we derive some connections among relevant information quantities, in order to understand the information flow in the closed-loop system.\\
\indent In what follows, we introduce the mathematical model of the feedback system considered in this paper. Without loss of generality, channel $C_1$ is modeled as
\begin{equation*}
p(y_i| x^i, y^{i-1})
\end{equation*}
where $x_i, y_i$ are respectively channel input and output at time instant $i$. $x^i$ represents the sequence $x_1, x_2, \cdots, x_i$ and $y^{i-1}$ represents the sequence $y_1, y_2, \cdots, y_{i-1}$. This probabilistic channel model indicates that the i-th channel output depends on the current channel input and all the previous channel inputs, channel outputs. Moreover, the channel input $x_i$ is determined by the message $m$, inputs $e^{i-1}$ to the system $S_1$ and previous channel inputs $x^{i-1}$. Similarly,  channel $C_2$ is modeled as
\begin{equation*}
p(e_i| e^{i-1}, y^i)
\end{equation*}
where the current channel output $e_i$ depends on the current feedback input $y_i$ and all the previous feedback inputs and outputs.



\begin{figure}
\begin{center}
\includegraphics[scale=0.70]{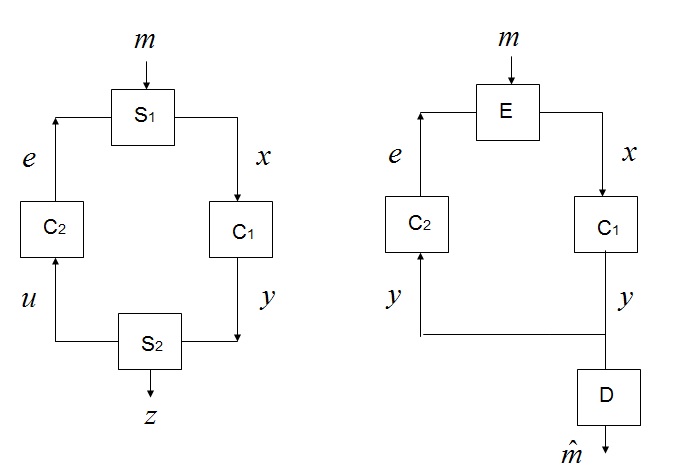}
\caption{A general feedback system (left) and a feedback communication system (right). }
\label{Fig:general_feedback_system}
\end{center}
\end{figure}

\section{Information Identities in Feedback Systems}

\indent First of all, we revisit the definition of directed information which will be repeatedly used in the paper.
\begin{definition}
Given random sequences $x^n$, $y^n$, the directed information from $x^n$ to $y^n$ is defined as
\begin{equation*}
I(x^n \rightarrow y^n) = \sum_{i=1}^{n}I(x^i;y_i|y^{i-1}).
\end{equation*}
\end{definition}

Next, we define the causal conditioning directed information \cite{chong_arXiv11}.
\begin{definition}
Given random sequences $x^n$, $y^n$ and $z^n$, the directed information from $x^n$ to $y^n$ causal conditioning on $z^n$ is defined as
\begin{equation*}
I(x^n \rightarrow y^n||z^n) = \sum_{i=1}^{n}I(x^i;y_i|y^{i-1}, z^i).
\end{equation*}
\end{definition}

Before moving forward to our main results, we present some technical lemmas below.
\begin{lemma}
Consider a feedback system as shown in Fig. \ref{Fig:general_feedback_system}. Let $m = x_0$, then the mutual information between the information injected into system $S_1$ and the information sequence $e^n$ can be characterized by
\begin{equation*}
I(x_0; e^n) = I(x^{n}\rightarrow e^n).
\end{equation*}
\label{lem01}
\end{lemma}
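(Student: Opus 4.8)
The plan is to expand both quantities with the chain rule for mutual information, reduce the claim to a single per-time-step identity, and then kill the two residual terms using the structure of the encoder and of the two channels.

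First I would write, by the chain rule, $I(x_0; e^n) = \sum_{i=1}^{n} I(x_0; e_i \mid e^{i-1})$, and recall from the definition of directed information that $I(x^n \rightarrow e^n) = \sum_{i=1}^{n} I(x^i; e_i \mid e^{i-1})$. Hence it suffices to show $I(x_0; e_i \mid e^{i-1}) = I(x^i; e_i \mid e^{i-1})$ for each $i$. Expanding $I(x^i, x_0; e_i \mid e^{i-1})$ in the two possible orders and rearranging gives
\begin{equation*}
I(x^i; e_i \mid e^{i-1}) = I(x_0; e_i \mid e^{i-1}) + I(x^i; e_i \mid e^{i-1}, x_0) - I(x_0; e_i \mid e^{i-1}, x^i),
\end{equation*}
so everything comes down to showing that the last two terms vanish.

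The term $I(x^i; e_i \mid e^{i-1}, x_0)$ vanishes because of the encoder. Since each $x_j$ is determined by $(m, x^{j-1}, e^{j-1}) = (x_0, x^{j-1}, e^{j-1})$, a straightforward induction on $j$ shows that $x^i = (x_1, \dots, x_i)$ is a deterministic function of $(x_0, e^{i-1})$; therefore $H(x^i \mid e^{i-1}, x_0) = 0$ and a fortiori $I(x^i; e_i \mid e^{i-1}, x_0) = 0$.

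The term $I(x_0; e_i \mid e^{i-1}, x^i)$ is the heart of the matter, and this is where I expect the real work: I must establish the Markov relation $x_0 - (e^{i-1}, x^i) - e_i$ despite the feedback loop tying $x$ and $e$ together. I would split it across the two channels. Channel $C_2$, $p(e_i \mid e^{i-1}, y^i)$, gives $e_i \perp (x_0, x^i) \mid (e^{i-1}, y^i)$, so after marginalizing $y^i$ it remains to show $y^i \perp x_0 \mid (e^{i-1}, x^i)$, i.e. that the forward path does not reveal $x_0$ beyond what is already carried by $x^i$. For this I would write the joint law of the variables preceding $e_i$, namely a product of $p(x_0)$, the encoder kernels $p(x_j \mid x_0, x^{j-1}, e^{j-1})$, the channel-$C_1$ kernels $p(y_j \mid x^j, y^{j-1})$, and the channel-$C_2$ kernels $p(e_j \mid e^{j-1}, y^j)$, and then form $p(y^i \mid e^{i-1}, x^i, x_0)$: the prior and all encoder kernels are measurable with respect to the conditioning variables together with $x_0$, so they cancel between numerator and denominator, leaving a quantity depending only on $(e^{i-1}, x^i, y^i)$. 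Equivalently, summing $y^n$ out of the joint shows $\sum_{y^n} \prod_j p(y_j \mid x^j, y^{j-1}) p(e_j \mid e^{j-1}, y^j)$ factors as $\prod_j q(e_j \mid e^{j-1}, x^j)$, so $(x_0, x^n, e^n)$ is distributed exactly as in a standard feedback channel with message $x_0$, deterministic encoder, and channel $q$, for which the identity is classical. The only delicate points here are the bookkeeping and the usual positivity/support caveats, not any new idea. With $I(x_0; e_i \mid e^{i-1}, x^i) = 0$ in hand, the displayed identity collapses to $I(x^i; e_i \mid e^{i-1}) = I(x_0; e_i \mid e^{i-1})$, and summing over $i = 1, \dots, n$ gives $I(x^n \rightarrow e^n) = I(x_0; e^n)$.
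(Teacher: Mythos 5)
Your proof is correct and rests on exactly the same two facts as the paper's: that $x^i$ is a deterministic function of $(x_0, e^{i-1})$, and the Markov chain $x_0 - (e^{i-1}, x^i) - e_i$; the only cosmetic difference is that you reach them via the symmetric expansion of $I(x^i, x_0; e_i \mid e^{i-1})$ rather than by adding and then dropping conditioning inside the entropies. Your factorization argument for the Markov chain is in fact more rigorous than the paper's one-sentence justification.
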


\begin{proof}
\begin{equation*}
\begin{split}
I(x_0;e^n)=&H(e^n)-H(e^n|x_0)\\
=&\sum_{i=1}^{n}H(e_i|e^{i-1})-\sum_{i=1}^{n}H(e_i|e^{i-1},x_0)\\
\stackrel{(a)}=&\sum_{i=1}^{n}H(e_i|e^{i-1})-\sum_{i=1}^{n}H(e_i|e^{i-1},x_0, x^i)\\
\stackrel{(b)}=&\sum_{i=1}^{n}H(e_i|e^{i-1})-\sum_{i=1}^{n}H(e_i|e^{i-1}, x^i)\\
=&\sum_{i=1}^{n}I(x^i, e_i|e^{i-1})\\
=& I(x^n\rightarrow e^n)\\
\end{split}
\end{equation*}
where (a) follows from the fact that $x^i$ is determined by $x_0$ and $e^{i-1}$. Line (b) follows from the Markov chain $x_0 - (e^{i-1}, x^i) - e_i$. This Markov chain is true because the information of $x_0$ in $e_i$ (up to time instant $i$) can only be obtained through $(e^{i-1}, x^i)$.
\end{proof}

This lemma indicates that the information $x_0$ can be leant from $e^n$ in a causal manner via the sequence $x^n$. If $x_0$ is assumed to be a message index, $S_1$ is an encoder/transmitter, and there exists a decoder/receiver taking the outputs of channel $C_2$, this lemma turns out to be the essence of capacity characterization of communication channels with noiseless feedback.

\begin{lemma}
Consider a feedback system as shown in Fig. \ref{Fig:general_feedback_system}. Let $m = x_0$, the information flow through channel $C_2$ can be decomposed into two independent flows as
\begin{equation*}
I(y^{n}\rightarrow e^n) = I(e^n; x_0) + I(y^n\rightarrow e^n|x_0).
\end{equation*}
\label{lem02}
\end{lemma}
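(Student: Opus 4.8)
The plan is to expand $I(y^n \rightarrow e^n)$ directly from the definition of directed information and then split each conditional mutual information term by introducing $x_0$ into the conditioning, using the chain rule for mutual information. Concretely, I would start from
\begin{equation*}
I(y^n \rightarrow e^n) = \sum_{i=1}^{n} I(y^i; e_i \mid e^{i-1}),
\end{equation*}
and apply the chain rule to the pair $(x_0, y^i)$: for each $i$,
\begin{equation*}
I(x_0, y^i; e_i \mid e^{i-1}) = I(x_0; e_i \mid e^{i-1}) + I(y^i; e_i \mid e^{i-1}, x_0) = I(y^i; e_i \mid e^{i-1}) + I(x_0; e_i \mid e^{i-1}, y^i).
\end{equation*}
Summing the second equality over $i$, the term $\sum_i I(x_0; e_i \mid e^{i-1})$ telescopes (via the chain rule for mutual information) to $I(x_0; e^n)$, and $\sum_i I(y^i; e_i \mid e^{i-1}, x_0)$ is by definition $I(y^n \rightarrow e^n \mid\mid x_0)$; note the lemma's statement writes this as $I(y^n \rightarrow e^n \mid x_0)$, which I read as the causal-conditioning directed information of Definition 2 with $z^n = x_0$ (a constant-padded sequence), so these agree. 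It then remains to show $\sum_i I(x_0; e_i \mid e^{i-1}, y^i) = 0$, i.e.\ that the extra cross term vanishes.

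The key step, and the main obstacle, is establishing that $I(x_0; e_i \mid e^{i-1}, y^i) = 0$ for every $i$, equivalently the Markov chain $x_0 - (e^{i-1}, y^i) - e_i$. This should follow from the channel model of $C_2$, namely $p(e_i \mid e^{i-1}, y^i)$: the current feedback output $e_i$ depends only on $(e^{i-1}, y^i)$ and not further on $x_0$. I would justify this carefully by writing out the joint factorization of $p(x_0, x^n, y^n, e^n)$ using the causal structure — $x_i$ determined by $(x_0, e^{i-1}, x^{i-1})$, $y_i$ from $p(y_i \mid x^i, y^{i-1})$, and $e_i$ from $p(e_i \mid e^{i-1}, y^i)$ — and observing that conditioned on $(e^{i-1}, y^i)$ the variable $e_i$ is independent of everything in the past, in particular of $x_0$. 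This is the same flavor of argument as step (b) in the proof of Lemma \ref{lem01}, and I expect to invoke it in essentially the same way.

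Once the cross term is zero, combining the pieces gives
\begin{equation*}
I(y^n \rightarrow e^n) = \sum_{i=1}^{n} I(y^i; e_i \mid e^{i-1}, x_0) + \sum_{i=1}^{n} I(x_0; e_i \mid e^{i-1}) = I(y^n \rightarrow e^n \mid x_0) + I(x_0; e^n),
\end{equation*}
which is the claimed decomposition. I would close by remarking on the interpretation: the total (causal) information arriving at $S_1$ through channel $C_2$ splits into the part that genuinely carries the external information $x_0$ back to $S_1$ and the part that flows through $C_2$ independently of $x_0$ — the "law of conservation" flavor advertised in the abstract. The only place requiring genuine care is the Markov-chain justification; the rest is bookkeeping with the chain rule.
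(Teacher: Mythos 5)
Your proposal is correct and is essentially the paper's own argument: the paper also expands $I(y^n\rightarrow e^n)$ term by term, inserts $x_0$ into the conditioning via the same Markov chain $x_0-(e^{i-1},y^i)-e_i$ (which it attributes to the causality of the feedback channel $p(e_i|e^{i-1},y^i)$), and then adds and subtracts $H(e_i|e^{i-1},x_0)$ — an entropy-level rewriting of exactly your two-way chain-rule expansion of $I(x_0,y^i;e_i|e^{i-1})$. Your more careful justification of the vanishing cross term via the joint factorization is a welcome strengthening of the paper's one-line appeal to causality, but the route is the same.
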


\begin{proof}
\begin{equation*}
\begin{split}
I(y^n\rightarrow e^n)=&\sum_{i=1}^{n}I(y^i, e_i|e^{i-1})\\
=&\sum_{i=1}^{n}H(e_i|e^{i-1})-H(e_i|e^{i-1}, y^i)\\
\stackrel{(a)}=&\sum_{i=1}^{n}H(e_i|e^{i-1})-H(e_i|e^{i-1}, y^i, x_0)\\
=&\sum_{i=1}^{n}H(e_i|e^{i-1})-H(e_i| e^{i-1}, x_0) + H(e_i| e^{i-1}, x_0) - H(e_i|e^{i-1}, y^i, x_0)\\
=&\sum_{i=1}^{n}I(e_i; x_0| e^{i-1}) + I(e_i; y^i|e^{i-1},x_0)\\
=&I(x_0; e^n) + I(y^n\rightarrow e^n|x_0)\\
\end{split}
\end{equation*}
where (a) is true due to the causality of the feedback channel.
\end{proof}

Based on the above two lemmas, we have the following decomposition equality whose proof directly follows from Lemma \ref{lem01} and \ref{lem02}.

\begin{theorem}
Consider a feedback system as shown in Fig. \ref{Fig:general_feedback_system}. Let $m = x_0$, the information flow through channel $C_2$ can be decomposed into two independent flows as
\begin{equation*}
I(y^{n}\rightarrow e^n) = I(x^{n}\rightarrow e^n) + I(y^n\rightarrow e^n|x_0).
\end{equation*}
\end{theorem}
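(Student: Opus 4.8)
The plan is to obtain the claimed identity by a direct substitution, chaining together the two technical lemmas that precede it. Lemma \ref{lem02} already expresses the feedback‑channel flow $I(y^{n}\rightarrow e^n)$ as a sum of two terms, one of which is the mutual information $I(e^n;x_0)$ between the injected information and the sequence $e^n$, and the other of which is the causally conditioned directed information $I(y^n\rightarrow e^n|x_0)$. Lemma \ref{lem01}, in turn, identifies that mutual information term with the directed information $I(x^{n}\rightarrow e^n)$ through channel $C_1$. So the whole argument amounts to replacing $I(e^n;x_0)$ in the statement of Lemma \ref{lem02} by $I(x^{n}\rightarrow e^n)$.

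Concretely, I would write the one‑line derivation
\begin{equation*}
\begin{split}
I(y^{n}\rightarrow e^n)
\stackrel{(a)}{=}& I(e^n;x_0) + I(y^n\rightarrow e^n|x_0)\\
\stackrel{(b)}{=}& I(x^{n}\rightarrow e^n) + I(y^n\rightarrow e^n|x_0),
\end{split}
\end{equation*}
where $(a)$ is exactly Lemma \ref{lem02} and $(b)$ applies Lemma \ref{lem01} (together with the symmetry $I(e^n;x_0)=I(x_0;e^n)$ of mutual information) to the first summand. No new Markov‑chain or causality argument is needed, since all the probabilistic structure of the feedback loop has already been used inside the proofs of the two lemmas.

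There is essentially no obstacle here: the only thing to be careful about is bookkeeping, namely that the setup of the theorem (the feedback system of Fig.~\ref{Fig:general_feedback_system} with $m=x_0$) is literally the common hypothesis of Lemmas \ref{lem01} and \ref{lem02}, so both lemmas are applicable verbatim. I would also add a sentence of interpretation afterward — that the theorem says the information flowing through $C_2$ splits into the part that carries $x_0$ forward (the directed information through $C_1$) and a residual part $I(y^n\rightarrow e^n|x_0)$ independent of $x_0$ — to tie it back to the ``conservation of information flows'' theme announced in the abstract, but this commentary is not part of the proof itself.
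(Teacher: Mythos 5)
Your proposal is correct and is exactly the paper's argument: the paper states that the theorem "directly follows from Lemma \ref{lem01} and \ref{lem02}," i.e., substituting $I(x_0;e^n)=I(x^n\rightarrow e^n)$ from Lemma \ref{lem01} into the decomposition of Lemma \ref{lem02}. Nothing further is needed.
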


\begin{remark}
This equality can be interpreted as a law of conservation of information flows. Quantity $I(x^{n}\rightarrow e^n)$ is the amount of information provided by the external input $x_0$, and the  quantity $I(y^n\rightarrow e^n|x_0)$ is the amount of information provided by the uncertainty in the channel $C_1$ (due to the presence of noise). The sum of these two quantities equals to the total amount of information delivered from system $S_2$ to system $S_1$ through channel $C_2$.
\end{remark}
\begin{remark}
This theorem can be alternatively proved by taking Theorem $5$ and Theorem $6$ in \cite{derpich2013fundamental}. Note that a general framework of feedback systems has been investigated in \cite{derpich2013fundamental}.
\end{remark}

\begin{theorem}
Consider a feedback system as shown in Fig. \ref{Fig:general_feedback_system}. Let $m = x_0$,
\begin{equation*}
I(x^{n}\rightarrow y^n) \geq I(x^{n-1}\rightarrow e^{n-1}) + I(e^{n-1}\rightarrow y^n).
\end{equation*}
\end{theorem}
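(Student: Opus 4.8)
The plan is to peel the theorem apart step by step in time, writing $I(x^{n}\to y^{n})$ as the feedback term $I(e^{n-1}\to y^{n})=\sum_{i=1}^{n}I(e^{i-1};y_i\mid y^{i-1})$ plus a residual that carries the message $x_0$, and then to bound that residual from below. First I would use the channel law of $C_1$, $p(y_i\mid x^i,y^{i-1})$, which gives the Markov chain $e^{i-1}-(x^i,y^{i-1})-y_i$ and hence $I(e^{i-1};y_i\mid y^{i-1},x^i)=0$; expanding $I(x^i,e^{i-1};y_i\mid y^{i-1})$ by the chain rule in the two orders then yields the per-letter identity
\[
I(x^i;y_i\mid y^{i-1})=I(e^{i-1};y_i\mid y^{i-1})+I(x^i;y_i\mid y^{i-1},e^{i-1}),
\]
and summing over $i=1,\dots,n$ gives $I(x^{n}\to y^{n})=I(e^{n-1}\to y^{n})+\sum_{i=1}^{n}I(x^i;y_i\mid y^{i-1},e^{i-1})$.

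Next I would simplify the residual sum. Since $x_0$ is a component of the vector $x^i$, each term satisfies $I(x^i;y_i\mid y^{i-1},e^{i-1})\ge I(x_0;y_i\mid y^{i-1},e^{i-1})$ — in fact with equality, because the assumed structure makes $x^i$ a deterministic function of $(x_0,e^{i-1})$, so that $\sigma(x^i,e^{i-1})=\sigma(x_0,e^{i-1})$. Hence $I(x^{n}\to y^{n})\ge I(e^{n-1}\to y^{n})+\sum_{i=1}^{n}I(x_0;y_i\mid y^{i-1},e^{i-1})$. The last step is to recognize the remaining sum as a joint mutual information: expanding $I(x_0;y^{n},e^{n-1})$ along the interleaved causal order $y_1,e_1,\dots,y_{n-1},e_{n-1},y_n$ produces exactly $\sum_{i=1}^{n}I(x_0;y_i\mid y^{i-1},e^{i-1})+\sum_{i=1}^{n-1}I(x_0;e_i\mid y^{i},e^{i-1})$, and every term of the second sum vanishes because $p(e_i\mid e^{i-1},y^{i})$ forces the Markov chain $x_0-(e^{i-1},y^{i})-e_i$ (the same conditional independence already used in the proof of Lemma \ref{lem02}). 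Thus the residual equals $I(x_0;y^{n},e^{n-1})=I(x_0;e^{n-1})+I(x_0;y^{n}\mid e^{n-1})\ge I(x_0;e^{n-1})$, and Lemma \ref{lem01} applied with $n-1$ in place of $n$ gives $I(x_0;e^{n-1})=I(x^{n-1}\to e^{n-1})$. Combining the three pieces proves the inequality, with slack exactly $I(x_0;y^{n}\mid e^{n-1})$.

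I expect the main obstacle to be purely one of bookkeeping: keeping the length-mismatched directed informations straight — in particular reading $I(e^{n-1}\to y^{n})$ as the strictly causal $\sum_{i=1}^{n}I(e^{i-1};y_i\mid y^{i-1})$ that matches the one-step feedback delay — getting the interleaving order right when expanding $I(x_0;y^{n},e^{n-1})$, and checking that each Markov-chain invocation is licensed by the channel models $p(y_i\mid x^i,y^{i-1})$ and $p(e_i\mid e^{i-1},y^{i})$, exactly as in Lemmas \ref{lem01} and \ref{lem02}. Once the conventions are fixed, each individual step is a one-line chain-rule manipulation, and the only substantive move is trading the growing input history $x^i$ for the single symbol $x_0$ via $\sigma(x^i,e^{i-1})=\sigma(x_0,e^{i-1})$.
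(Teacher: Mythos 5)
Your overall architecture is sound and lands on exactly the decomposition the paper uses, namely $I(x^n\to y^n)=I(x^{n-1}\to e^{n-1})+I(x_0;y^n\mid e^{n-1})+I(e^{n-1}\to y^n)$, with the same nonnegative slack $I(x_0;y^n\mid e^{n-1})$ dropped at the end. The only structural difference is that the paper imports the identity $I(x^n\to y^n)=I(x_0;y^n)+I(e^{n-1};x_0\mid y^n)+I(e^{n-1}\to y^n)$ from \cite{Limal_infoIdentity} and then regroups $I(x_0;(y^n,e^{n-1}))$ in the other order before applying Lemma \ref{lem01}, whereas you derive that identity from scratch via interleaved chain rules and the two channel Markov properties; your reading of $I(e^{n-1}\to y^n)$ as $\sum_{i=1}^{n}I(e^{i-1};y_i\mid y^{i-1})$ is the right convention for the length mismatch. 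So this is essentially the paper's proof made self-contained.

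There is, however, one step whose stated justification does not work as written. You assert $I(x^i;y_i\mid y^{i-1},e^{i-1})\ge I(x_0;y_i\mid y^{i-1},e^{i-1})$ ``because $x_0$ is a component of the vector $x^i$'' and because $\sigma(x^i,e^{i-1})=\sigma(x_0,e^{i-1})$. Under the paper's indexing $x^i=(x_1,\dots,x_i)$ does not contain $x_0=m$, and the encoder need not be invertible in $x_0$, so only the inclusion $\sigma(x^i,e^{i-1})\subseteq\sigma(x_0,e^{i-1})$ is guaranteed --- and that inclusion by itself yields the \emph{reverse} inequality $I(x^i;y_i\mid y^{i-1},e^{i-1})\le I(x_0;y_i\mid y^{i-1},e^{i-1})$. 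The direction you actually need comes from the channel law of $C_1$: the Markov chain $(x_0,e^{i-1})-(x^i,y^{i-1})-y_i$ gives $I(x_0;y_i\mid x^i,y^{i-1},e^{i-1})=0$, hence $I(x^i;y_i\mid y^{i-1},e^{i-1})=I(x_0,x^i;y_i\mid y^{i-1},e^{i-1})\ge I(x_0;y_i\mid y^{i-1},e^{i-1})$, with equality then following by combining this with the determinism direction. Since this is the same Markov property you already invoke to kill $I(e^{i-1};y_i\mid y^{i-1},x^i)$, the repair is immediate, but as written the step is supported by an argument that points the wrong way.
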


\begin{remark}
This inequality indicates that the information quantity flowing in channel $C_1$ from system $S_1$ to system $S_2$ is lower bounded by two quantities characterized by the directed information. This inequality is true due to the transmission of the external input $x_0$, of which the quantity $I(y^{n};x_0|e^{n-1})$ is not presented in the right hand side of the inequality. This can be seen from the proof of this theorem.
\end{remark}

\begin{proof}
Recall the information identity from \cite{Limal_infoIdentity},
\begin{equation*}
I(x^n\rightarrow y^n) = I(x_0;y^n) + I(e^{n-1};x_0|y^n)+ I(e^{n-1}\rightarrow y^n)
\end{equation*}
Then,
\begin{equation*}
\begin{split}
 I(x_0;y^n) + I(e^{n-1};x_0|y^n)=I(x_0; (y^n, e^{n-1}) =  I(x_0;e^{n-1}) + I(y^{n};x_0|e^{n-1})\\
\end{split}
\end{equation*}
Using Lemma \ref{lem01}, the above equality is equivalent to
\begin{equation*}
\begin{split}
& I(x_0;y^n) + I(e^{n-1};x_0|y^n)= I(x^{n-1}\rightarrow e^{n-1}) + I(y^{n};x_0|e^{n-1})\\
\end{split}
\end{equation*}
Therefore, we have
\begin{equation*}
\begin{split}
I(x^n\rightarrow y^n) = & I(x_0;y^n) + I(e^{n-1};x_0|y^n)+ I(e^{n-1}\rightarrow y^n)\\
= & I(x^{n-1}\rightarrow e^{n-1}) + I(y^{n};x_0|e^{n-1}) +  I(e^{n-1}\rightarrow y^n)\\
\geq & I(x^{n-1}\rightarrow e^{n-1}) + I(e^{n-1}\rightarrow y^n).\\
\end{split}
\end{equation*}
\end{proof}

\begin{theorem}
Consider a feedback system as shown in Fig. \ref{Fig:general_feedback_system}. Let $m = x_0$, the information quantity flowing through system $S_1$ (from system inputs $e^{n-1}$ to system outputs $x^n$) can be decomposed as
\begin{equation*}
I(e^{n-1}\rightarrow x^n) = I(y^{n-1}\rightarrow x^{n}) + I(e^{n}\rightarrow x^n|| y^{n-1}).
\end{equation*}
\end{theorem}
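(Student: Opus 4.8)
The plan is to prove the identity one time step at a time: expand each directed-information quantity through its definition, combine the two terms on the right with the chain rule, and then kill the surviving cross term with a Markov chain that comes from the causality of $S_1$. This follows the same pattern as the proof of Lemma \ref{lem02}.

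Expanding the right-hand side through the definitions,
\[
I(y^{n-1}\rightarrow x^n) + I(e^{n}\rightarrow x^n|| y^{n-1}) = \sum_{i=1}^{n}\Big[\,I(y^{i-1};x_i|x^{i-1}) + I(e^{i-1};x_i|x^{i-1},y^{i-1})\,\Big],
\]
where the $i$-th term of the causally-conditioned directed information is $I(e^{i-1};x_i|x^{i-1},y^{i-1})$. By the chain rule the $i$-th summand collapses to $I(e^{i-1},y^{i-1};x_i|x^{i-1})$, and expanding this same quantity by the chain rule in the other order gives $I(e^{i-1},y^{i-1};x_i|x^{i-1}) = I(e^{i-1};x_i|x^{i-1}) + I(y^{i-1};x_i|x^{i-1},e^{i-1})$. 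Summing over $i$, the terms $I(e^{i-1};x_i|x^{i-1})$ reconstitute $I(e^{n-1}\rightarrow x^n)$, the left-hand side, so the whole theorem reduces to showing that the cross terms vanish:
\[
I(y^{i-1};x_i|x^{i-1},e^{i-1}) = 0, \qquad i = 1,\dots,n.
\]

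The main (and essentially only) obstacle is this vanishing, equivalently the Markov chain $y^{i-1} - (x^{i-1},e^{i-1}) - x_i$. Since $x_i$ is determined by $(x_0,e^{i-1},x^{i-1})$, it suffices to show that $x_0$ and $y^{i-1}$ are conditionally independent given $(x^{i-1},e^{i-1})$: once the system inputs $e^{i-1}$ and the channel inputs $x^{i-1}$ are known, the channel-output realizations $y^{i-1}$ carry no further information about the external input $x_0$. I would get this by unrolling the joint law
\[
p(x_0)\prod_{j=1}^{i-1}\big[\,x_j = g_j(x_0,e^{j-1},x^{j-1})\,\big]\,p(y_j|x^j,y^{j-1})\,p(e_j|e^{j-1},y^j)
\]
and observing that, with $x^{i-1}$ and $e^{i-1}$ held fixed, every channel kernel $p(y_j|x^j,y^{j-1})$ and feedback kernel $p(e_j|e^{j-1},y^j)$ with $j\le i-1$ already has all of its conditioning arguments inside $x^{i-1}$ and $y^{i-1}$, so it contributes no factor depending on $x_0$; the entire $x_0$-dependence of the posterior sits in the deterministic encoder indicators, which do not involve $y^{i-1}$. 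Hence $p(x_0|x^{i-1},e^{i-1},y^{i-1}) = p(x_0|x^{i-1},e^{i-1})$, and by data processing $I(y^{i-1};x_i|x^{i-1},e^{i-1}) \le I(y^{i-1};x_0|x^{i-1},e^{i-1}) = 0$. This is the same causality observation already used in steps (a) and (b) of the proof of Lemma \ref{lem01} and in step (a) of the proof of Lemma \ref{lem02}.

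Alternatively, the calculation can be run in the entropy form of Lemma \ref{lem02}'s proof: expand $I(e^{n-1}\rightarrow x^n) = \sum_i\big[H(x_i|x^{i-1}) - H(x_i|x^{i-1},e^{i-1})\big]$, use the Markov chain to rewrite $H(x_i|x^{i-1},e^{i-1})$ as $H(x_i|x^{i-1},e^{i-1},y^{i-1})$, then add and subtract $H(x_i|x^{i-1},y^{i-1})$ and regroup the two differences as $I(x_i;y^{i-1}|x^{i-1})$ and $I(x_i;e^{i-1}|x^{i-1},y^{i-1})$. Either way, the only thing left to check is the routine index bookkeeping — the $i=1$ boundary term (where $e^0$ and $y^0$ are empty and the corresponding summands are zero) and the fact that the causally-conditioned directed information $I(e^{n}\rightarrow x^n|| y^{n-1})$ indeed expands as $\sum_{i=1}^n I(e^{i-1};x_i|x^{i-1},y^{i-1})$ — after which the Markov chain does all the work.
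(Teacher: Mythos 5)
Your proposal is correct and follows essentially the same route as the paper's proof: both reduce the identity to the chain-rule decomposition of $I(e^{i-1},y^{i-1};x_i|x^{i-1})$ and the vanishing of the cross term $I(y^{i-1};x_i|x^{i-1},e^{i-1})$ via the Markov chain $y^{i-1}-(x^{i-1},e^{i-1})-x_i$; indeed your ``alternative entropy form'' is verbatim the paper's add-and-subtract argument. You actually justify the causality step (a) more carefully than the paper does, by unrolling the joint law to show $x_0\perp y^{i-1}\mid(x^{i-1},e^{i-1})$.
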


\begin{proof}
\begin{equation*}
\begin{split}
 I(e^{n-1}\rightarrow x^n)=&\sum_{i=1}^{n}I(e^{i-1}, x_i|x^{i-1})\\
=&\sum_{i=1}^{n}H(x_i|x^{i-1})-H(x_i|x^{i-1}, e^{i-1})\\
=&\sum_{i=1}^{n}H(x_i|x^{i-1})- H(x_i|x^{i-1}, y^{i-1}) + H(x_i|x^{i-1}, y^{i-1}) - H(x_i|x^{i-1}, e^{i-1})\\
\stackrel{(a)}= &\sum_{i=1}^{n}H(x_i|x^{i-1})- H(x_i|x^{i-1}, y^{i-1}) + H(x_i|x^{i-1}, y^{i-1}) - H(x_i|x^{i-1}, e^{i-1}, y^{i-1})\\
= & \sum_{i=1}^{n} I(x_i; y^{i-1}|x^{i-1}) + I(x_i; e^{i-1}|x^{i-1}, y^{i-1})\\
= & I(y^{n-1} \rightarrow x^n) + I(e^{n-1} \rightarrow x^n || y^{n-1})\\
\end{split}
\end{equation*}
\end{proof}

\begin{remark}
The first quantity $I(y^{n-1}\rightarrow x^{n})$ can be interpreted as the information provided by the uncertainty of channel $C_1$, and similarly the second quantity $I(e^{n}\rightarrow x^n|| y^{n-1})$ can be interpreted as the information provided by the uncertainty of channel $C_2$. In fact, there are three external inputs into the feedback system in total. As one of them $x_0$ is known by the system $S_1$, the information flowing through $S_1$ intuitively should equal the information provided by the other two external inputs, i.e., noises in channel $C_1$ and $C_2$.
\end{remark}

\section{Conclusion}
In this paper, we derive three information decomposition identities and inequalities in general feedback systems. These decompositions indicate a law of conservation of information flows in closed-loop systems. These decompositions are beneficial in interpreting information flows of closed-loop systems, and serve as fundamental tools to derive insightful results when specified to particular feedback control/communicaiton problems.

\bibliographystyle{IEEEtran}
\bibliography{ref}

\end{document}